\newtheorem{theorem}{Theorem}
\newtheorem{lemma}[theorem]{Lemma}
\newcommand{\reals}{\mathbb{R}}
\newcommand{\naturals}{\mathbb{N}}
\begin{document}
\title{Majority rule as a unique voting method in elections with multiple candidates}
\author{Mateusz Krukowski}
\affil{Institute of Mathematics, \L\'od\'z University of Technology, \\ W\'ol\-cza\'n\-ska 215, \
90-924 \ \L\'od\'z, \ Poland \\ \vspace{0.3cm} e-mail: mateusz.krukowski@p.lodz.pl}
\maketitle

\begin{abstract}
May's classical theorem states that in a single-winner choose-one voting system with just two candidates, majority rule is the only social choice function satisfying anonimity, neutrality and positive responsiveness axiom. Anonimity and neutrality are usually regarded as very natural constraints on the social choice function. Positive responsiveness, on the other hand, is sometimes deemed too strong of an axiom, which stimulates further search for less stringent conditions. One viable substitute is Gerhard J. Woeginger's ``reducibility to subsocieties''. We demonstrate that the condition generalizes to more than two candidates and, consequently, characterizes majority rule for elections with multiple candidates. 
\end{abstract}

\smallskip
\noindent 
\textbf{AMS Mathematics Subject Classification (2020): } 91B10, 91B12, 91B14 
% 91B10	Group preferences, 
% 91B12 Voting theory, 
% 91B14 Social choice

\section{Introduction}
\label{section:intro}

``No man is an island entire of itself; every man is a piece of the continent, a part of the main''. These two lines from John Donne's ``Devotions Upon Emergent Occasions'' encompass the simple truth that humans are social creatures and rarely do they thrive without societal structures. We cannot outwrestle a bear, outrun a cheetah or outswim a sea lion but the ability to communicate ideas and form complex hierarchies has led us, descendants of bare-foot hunter-gatherers of the African savannah, to complete dominion over all forms of life on the planet. Had we not learnt to organize and work in large groups, we would still be hiding in the tree crowns, trembling in fear not to fall prey to the predators roaming the land.

Not surprisingly, an endeavour to erect stable societal structures and hierarchies is riddled with challenges. Many of them stem from the fact that individual people have conflicting individual preferences and yet, some form of agreement or decision has to be made. History has no shortage of examples when supporters of one opinion convinced people with opposite views using spears, swords, arrows, guns or tanks. Luckily, over the ages people have also worked out less savage and brutal methods of settling debates. 

One of those methods is simply to take a vote and check which of the candidates or available options is the most popular amongst the voters. Such a system is called a majority voting rule and in 1952, Kenneth O. May proved that in the case of two candidates, it is the only voting method which discriminates neither the voters nor the candidates and satisfies a sort of tie-breaking condition (see \cite{May}). In order to build solid grounds for later sections, let us pause and contemplate May's result in detail. We strive to convey the spirit of May's theorem, although the notation and mathematical description is modernized and tailored to our needs in the sequel.   

Suppose we have $n$ voters (labeled $k=1,\ldots,n$) and only two ``candidates''. These can be two political parties, two presidential candidates or two alternatives whether to increase taxes or keep the status quo. Either way, we will refer to these possibilities as ``candidates'', for the sake of simplicity. We represent the two candidates by standard basis vectors $e_1$ and $e_2$ in the Euclidean space $\reals^2.$ If the $k-$th voter supports candidate $1$, we write $P_k = 1,$ if they support candidate $2$, we write $P_k = e_2$. There is a third option in which the voter abstains from voting altogether -- we describe this scenario with $P_k = 0.$ We gather individual votes into a matrix $P = (P_1,\ldots,P_n),$ which we call a voting profile. Now, the core problem of the voting assembly is: given a profile $P$, how do we choose between the two candidates in a ``fair'' manner? In mathematical terms, we want to construct a social choice function 
$$f : \bigcup_{n\in\naturals}\ \{0,e_1,e_2\}^n \longrightarrow \{0,e_1,e_2\}$$

\noindent
which reflects the decision of the entire congregation of $n$ voters and is, in some sense of the word, ``just''. Obviously, justice is not a mathematical term, so May came up with three conditions on the social choice function, which can serve as an approximation to the notion of fairness. Two of them are widely accepted till this day, whereas the third condition is a bit more controversial, as we shall shortly discuss. 

% ANONIMITY 
The first of May's conditions is referred to as anonimity and states that all votes should carry equal weight. In other words, the social choice function $f$ should be egalitarian in the sense that it does not matter whom votes for a given candidate. A vote is a vote, irrespective of whether it was cast by Mr Smith or Mr Jones. To describe anonimity in mathematical parlance, we make use of matrices which are formed as a permutation of vectors $e_1,\ldots,e_n.$ For any profile $P$ and permutation matrix $\sigma$, the profile $P\sigma$ is just a permutation of the original votes. Anonimity axiom states that such a permutation should not affect the outcome of the voting, which translates to the social choice function $f$ satisfying $f(P\sigma) = f(P)$ for every profile $P$ and permutation matrix $\sigma$.  

% NEUTRALITY
The second condition on May's list pertains to the candidates rather than the voters. It states that if we simultaneously swap the order of the candidates on all ballots, then we should also swap the winner. Mathematically, neutrality means that $f(\sigma P) = \sigma f(P)$ for every profile $P$ and permutation matrix $\sigma.$ Since May considered only two candidates, the permutation matrix can either be an identity matrix or a swap (transposition) of candidates.

% TIE-BREAKING
In order to explain the third of May's conditions, suppose we have an election with a voting profile $P$, which either results in a draw or crowns candidate 1 as the winner. May's tie-breaking condition states that if any ``dissident'' (who either abstained from voting, or voted for the losing candidate 2) changed his mind, then the new profile $P'$ would crown candidate 1 as the winner. To put it differently, if there is a tie or candidate 1 is already winning, then more votes in his favour cannot make him a loser. Obviously, the condition applies to both candidates in equal measure.

In order to formulate positive responsiveness in mathematical terms, we use the notation $P' >_k P$ if there exists a voter $l^*=1,\ldots,n$ such that 
\begin{itemize}
	\item $P'_l = P_l$ for every $l \neq l^*,$
	\item $P_{l^*} \neq e_k,$
	\item $P'_{l^*} = e_k$ for some $k=1,2.$
\end{itemize}

\noindent
We say that the social choice function $f$ satisfies positive responsiveness, if $f(P) = e_k$ implies $f(P') = e_k$ for every $k=1,2$ and profiles $P, P'$ such that $P' >_k P.$ 

% MAY THEOREM
With these three conditions at his disposal, May proved that the only social choice function which satisfies them all simultaneously is the majority rule, i.e., the candidate with higher number of votes wins or there is a tie if two candidates get an equal number of votes. In other words, if we concede that our voting method should not discriminate any voters or candidates and that it should be ``monotone'', then we have to go with majority rule as our social choice function. 

% PR IS CONTROVERSIAL 
As we have alluded to earlier, both anonimity and neutrality are widely accepted as very natural conditions. It is the third of May's conditions, positive responsiveness, that stirs controversy in the research community. In 2000 Campbell and Kelly went as far as to conclude that ``it is not at all clear why it [positive responsiveness axiom] should be imposed.'' Such attitude prompted both mathematicians and economists alike to look for possible substitutes, oftentimes with success:

% CHRONOLOGICAL ORDER 
\begin{itemize}
	\item In 1983 Fishburn replaced positive responsiveness with its mirror reflection, limited responsiveness (see \cite{Fishburn83}). To quote his own words, limited responsiveness states that ``if the issue passes when the vote is $x$, it will not be defeated, but could be unresolved, if one voter had abstained instead of voting `for', or had voted `against' instead of abstaining.''
	
	\item In 2002 Asan and Sanver replaced positive responsiveness with two ``milder'' conditions: Pareto optimality and weak path independence (see \cite{AsanSanver2002}). 
	
	\item In 2003 Woeginger came up with a substitute for positive responsiveness, which he called ``reducibility to subsocieties'' (see \cite{Woeginger}).
	
	\item In 2004, Miroiu presented a new axiomatization of majority rule that appealed to properties, which he called ``null society'' and ``subset decomposability'' (see \cite{Miroiu}). He did not get rid of monotonicity condition altogether, replacing positive responsiveness with additive responsiveness. 

	\item In 2006, Asan and Sanver explored Maskin monotonic aggregation rules with respect to the family of absolute qualified majority rules (see \cite{AsanSanver2006}).
	
	\item In 2019 Alcantud dismissed May's axioms altogether and suggested his original conditions: individual consistency, non-swap and social fairness (see \cite{Alcantud}).
\end{itemize}

Our paper is motivated by the work of Woeginger, as well as yet another paper by Robert E. Goodin and Christian List (see \cite{GoodinList}). In their publication the authors posed a very important and almost self-provoking question regarding May's result: what happens if we remove the restriction regarding only two candidates? After all, many elections in which we participate have more than just two alternatives (there are over dozen countries which incorporate majority rule as their voting method for the heads of state: Angola, Cameroon, Democratic Republic of Congo, Honduras, Iceland, Mexico, Nicaragua, Palestine, Panama, Paraguay, Rwanda, Singapore, South Korea, or Venezuela). 

Our goal is to elevate the contribution of Woeginer to the case of multiple-candidate elections. In other words, we want to investigate elections with more than two candidates, but at the same time, we follow the trend of avoiding such stringent conditions as positive responsiveness. To our rescue comes Woeginger's reducibility to subsocieties, which we generalize to encompass the case of multiple candidates. We devote the next section to formulating a suitable mathematical foundations for our endeavour.

\section{Beyond two candidates: a mathematical toolbox}	
\label{section:toolbox}
	
We consider $n$ voters (numbered with $l = 1,\ldots,n$) and $m\geqslant 2$ candidates (again, these can be either individual people, political parties, tax alternatives etc.). We associate the candidates with the standard basis in $\reals^m: e_1,\ldots,e_m.$ This correspondence allows us to write $P_1 = e_3$ if the first voter supports candidate $3$, $P_2 = e_5$ if the second voter supports candidate $5$ etc. Additionally, each voter can abstain from voting altogether, in which case we write $P_l = 0 \in\reals^m.$  

All the individual votes (i.e., vectors in $\reals^m$) $P_l,\ l=1,\ldots,n$ are compiled into matrix $P = (P_1|P_2|\ldots|P_n)$ called a voting profile. A social choice function is a map
$$f: \bigcup_{n\in\naturals}\ \{0,e_1,\ldots,e_m\}^n \longrightarrow \{0,e_1,\ldots,e_m\},$$ 

\noindent
which assigns a single outcome to every possible profile. In the introductory section we have already discussed two conditions that we will impose on the social choice function:

\begin{description}
	\item[\hspace{0.4cm} Anonimity (A).] We say that a social choice function $f$ is anonymous, if it does not favour or discriminate any of the voters, i.e., for any voting profile $P$ and any permutation matrix $\sigma$ we have $f(P\sigma) = f(P).$
 	
	\item[\hspace{0.4cm} Neutrality (N).] We say that a social choice function $f$ is neutral, if it does not favour or discriminate any of the candidates, i.e., for any voting profile $P$ and any permutation matrix $\tau$ we have $f(\tau P) = \tau f(P).$ 
\end{description}

Our next two axioms do not appear in the original May's theorem, because May used the much stronger condition of positive responsiveness. However, we feel that the two presented properties are so ``natural'' and ``justified'' that they should not stir much controversy. 

\begin{description}
	\item[\hspace{0.4cm} Duel property (DP).] A voting profile $P$, which consists solely of columns $e_i,e_j$ or $0$, is called a duel profile for candidates $i$ and $j$. We say that a social choice function $f$ satisfies the duel property, if the duel can only result in a tie or a win by one of the duellists. No third party can emerge victorious from a duel they did not participate in. To formalize the duel property mathematically, let $\Sigma_k(P)$ stand for the number of vectors $e_k$ amongst the columns of the voting profile $P.$ In other words, $\Sigma_k(P)$ measures the number of votes in $P$ amassed by the $k-$th candidate. The outcast property states that if $P$ is a duel profile between some candidates $i,j=1,\ldots,m$ then $f(P) = e_i,e_j$ or $0$.
	
	\item[\hspace{0.4cm} Pareto optimality (PO).] A voting profile $P$ is called a Pareto profile for candidate $k^*=0,1,\ldots,m$ if all voters either vote for this candidate or they abstain from voting altogether, i.e. $\Sigma_{k^*}(P) > 0$ and $\Sigma_k(P) = 0$ for every $k\neq 0,k^*.$ We say that a social choice function $f$ is Pareto optimal if whenever $P$ is a Pareto profile for some candidate $k^*$ then $f(P) = e_{k^*}.$
\end{description}

It is worth emphasizing that the duel property seems to be essential only if there are just $m=3$ candidates. If there are more than $3$ competitors, then \textbf{(DP)} follows from \textbf{(N)}:

\begin{lemma}
If there are at least $4$ candidates (i.e., $m\geqslant 4$), then any social choice function satisfying \textbf{(N)} also satisfies \textbf{(DP)}.
\label{DPredundant} 
\end{lemma}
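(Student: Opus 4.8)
The plan is to argue by contradiction, exploiting the freedom that neutrality grants us to permute the ``irrelevant'' candidates while leaving the duel profile itself untouched. So let $P$ be a duel profile for candidates $i$ and $j$, meaning that every column of $P$ belongs to $\{0, e_i, e_j\}$, and suppose toward a contradiction that \textbf{(DP)} fails. Since the only outcomes excluded by \textbf{(DP)} are the $e_k$ with $k\notin\{i,j\}$, this means $f(P) = e_k$ for some third candidate $k$.

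The crux is a simple counting observation: because $m\geqslant 4$, the set of candidates distinct from $i$ and $j$ has at least two elements, so I can choose yet another candidate $k'\notin\{i,j,k\}$. Let $\tau$ be the permutation matrix of the transposition that swaps candidates $k$ and $k'$ and fixes everyone else. Since $\tau$ fixes both $i$ and $j$, it acts trivially on every column of $P$ (it sends $e_i\mapsto e_i$, $e_j\mapsto e_j$ and $0\mapsto 0$), so that $\tau P = P$. Applying \textbf{(N)} then yields $f(P) = f(\tau P) = \tau f(P) = \tau e_k = e_{k'}$, which contradicts $f(P) = e_k$ because $k\neq k'$. Hence no third candidate can win, so $f(P)\in\{e_i,e_j,0\}$, which is exactly \textbf{(DP)}.

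I do not anticipate a genuine obstacle here; the whole argument rests on producing the transposition $\tau$ that fixes the two duellists yet genuinely moves the alleged third winner. The only point demanding care is verifying that $m\geqslant 4$ is precisely what guarantees a fourth label $k'$ distinct from $i$, $j$ and $k$ — this is where the hypothesis is used, and it also explains transparently why the case $m=3$ is genuinely excluded: with only three candidates there is no room to permute the intruder away, and a third-party victory in a duel need not violate neutrality.
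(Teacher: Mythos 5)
Your proof is correct and is essentially identical to the paper's own argument: both assume $f(P)=e_k$ for an intruder $k\notin\{i,j\}$, use $m\geqslant 4$ to find a fourth candidate to transpose with $k$ while fixing the duellists (so the profile is unchanged), and derive a contradiction from \textbf{(N)}. If anything, your write-up is slightly more explicit than the paper's in noting that the swap partner must also differ from $k$ and that $\tau P = P$.
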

\begin{proof}
Suppose, for the sake of contradiction, that $f(P) = e_k,$ where $P$ is a duel profile for candidates $i$ and $j$. Let $\tau$ be a permutation matrix which swaps candidate $k$ with $l\neq i,j$, leaving the remaining competitors (in particular $i$ and $j$) in their original places. Then 
$$f(\tau P) = f(P) = e_k$$

\noindent
but at the same time
$$\tau f(P) = \tau e_k = e_l.$$

\noindent
By \textbf{(N)} we get a contradiction $e_k = e_l,$ so the social choice function $f$ has to satisfy \textbf{(DP)}.
\end{proof}

It is clear that what makes the proof work is ample room in the space of candidates to swap candidate $k$ with ``some other'' candidate $l$, which is different from $i$ and $j$. This line of reasoning cannot be applied to $m = 3$ candidates.

We have now compiled the mathematical toolbox for future work. However, the introduced concepts are not sufficient to characterize the majority voting rule. For instance, consider a unanimity consent voting rule defined by 
\begin{gather}
UC(P) := \left\{\begin{array}{cl}
e_k & \text{ if $P$ is a Pareto profile for candidate $k$,}\\
0 & \text{ otherwise.}
\end{array}\right.
\label{unanimousconsent}
\end{gather}

\noindent
This means that $k-$th candidate becomes the winner if and only if no one voted for any other candidate (although abstaining from voting is permissible). In any other scenario, i.e., if there are at least two candidates in the voting profile $P$, the unanimity consent voting rule declares a tie. It is easy to see that this function satisfies all the above axioms, i.e., anonimity, neutrality, the duel property and Pareto optimality. However, it is also quite clear that the unanimity consent is different from the majority rule. This means that there are still some ``missing puzzles'' that require investigation, before we will be able to lay down a complete description of the majority voting rule. This is why section \ref{section:reducibility} focuses on Woeginger's condition of ``reducibility to subsocieties'', which constitutes a viable way to fill in the gaps and characterize the majority rule.

\section{Reducibility to subsocieties}
\label{section:reducibility}

Contemplating the classical May's theorem, one may arrive at a conclusion that allowing for just two candidates is a major restriction. After all, many elections we take part in involve multiple candidates. Goodin and List were well-aware of this issue (see \cite{GoodinList}) and managed to prove that May's theorem still holds for multiple candidates provided we extend the notions of anonimity, neutrality and positive responsiveness in a suitable manner. We have already described the fitting generalizations of anonimity and neutrality in Section \ref{section:toolbox}. As for the positive responsiveness, the Reader will surely guess its ``correct'' extension given our exposition in Section \ref{section:intro}. 

Irrespective of Goodin and List's generalization of May's theorem, some economists and mathematicians still believe that positive responsiveness is too restrictive and stringent. Consequently, researchers strive to come up with more flexible and relaxed conditions. In this section we focus on Woeginger's axiom called ``reducibility to subsocieties''. We take the liberty of modifying his original condition in order to take into account more than two candidates. 

\begin{description}
	\item[\hspace{0.4cm} Reducibility to subsocieties (RS).] We say that a social choice function $f$ satisfies reducibility to subsocieties if for every voting profile $P$ with at least $2$ voters we have 
		$$f(P) = f(f(P^{-1}) | \ldots | f(P^{-n})),$$
		
		\noindent
		where $P^{-l}$ denotes profile $P$ with $l-$th column removed. Woeginger argued that ``in certain university councils (...), when the chairman is monitoring a ballot, then sometimes he abstains from voting himself in order to keep the procedure transparent. Now assume that over time, the chairmanship circulates through the council, and that the voting rule should express the average opinion of all such councils under all the possible chairmen. Then the council without chairman forms a subsociety of the council, (...) and RS stipulates that the aggregate preference of the total council behaves in the same way as the aggregate preference of all the subsocieties under changing chairmen.''	
\end{description}

As we are about to demonstrate, \textbf{(RS)} is the ``missing puzzle'' allowing for the characterization of the majority rule. However, prior to proving the main result of the section, we need the following auxiliary result:

\begin{lemma}
Let $f$ be a social choice function satisfying \textbf{(A)} and \textbf{(N)}. If the voting profile $P$ results in a tie between (at least) two candidates $i$ and $j$, then neither of them can be a winner, i.e., if there exist distinct $i,j =1,\ldots,m$ such that $\Sigma_i(P) = \Sigma_j(P)$ then $f(P)\neq e_i, e_j.$ \label{TiersCannotBeWinners}
\end{lemma}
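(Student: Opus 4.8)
The plan is to exploit the symmetry between the tied candidates $i$ and $j$ by playing \textbf{(A)} against \textbf{(N)}. Let $\tau$ be the permutation matrix that transposes candidates $i$ and $j$ and fixes every other candidate. Acting on the left, $\tau$ converts each column $e_i$ of $P$ into $e_j$, each column $e_j$ into $e_i$, and leaves every column $e_k$ (for $k\neq i,j$) and every abstention $0$ untouched. The first step I would carry out is the combinatorial observation that, precisely because $\Sigma_i(P)=\Sigma_j(P)$, the profile $\tau P$ has exactly as many $e_i$-columns and as many $e_j$-columns as $P$ does (the two counts merely get interchanged, and they were equal). Hence $\tau P$ and $P$ have the same multiset of columns, so $\tau P$ is obtainable from $P$ by a pure reordering of the voters: there exists a permutation matrix $\sigma$ with $\tau P = P\sigma$.

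With this identity in hand, the remaining steps are a short chain of equalities. First, anonimity gives $f(\tau P)=f(P\sigma)=f(P)$. Second, neutrality gives $f(\tau P)=\tau f(P)$. Comparing the two yields $\tau f(P)=f(P)$, i.e., $f(P)$ is a fixed point of $\tau$. But $\tau$ acts as a genuine swap on $e_i$ and $e_j$, sending $e_i\mapsto e_j$ and $e_j\mapsto e_i$; a vector fixed by $\tau$ therefore cannot equal either $e_i$ or $e_j$. Concretely, if we had $f(P)=e_i$ then $\tau f(P)=e_j\neq e_i$, contradicting $\tau f(P)=f(P)$, and symmetrically for $e_j$. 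This forces $f(P)\neq e_i,e_j$, which is exactly the claim.

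The only genuinely delicate point, and the step I would write out most carefully, is the passage $\tau P = P\sigma$. It is intuitively obvious but rests entirely on the equality of vote counts $\Sigma_i(P)=\Sigma_j(P)$: without it, relabeling the candidates would change the column multiset (one would end up with a different number of $e_i$'s than one started with) and $\tau P$ would no longer be reachable from $P$ by an anonimity permutation, so the argument would collapse. Everything else is a formal manipulation of \textbf{(A)} and \textbf{(N)}. I would also note in passing that the argument uses nothing about $i$ and $j$ beyond their being a tied pair, so it applies verbatim to any two candidates sharing the maximal (or indeed any common) vote total, which is why ``at least two candidates'' in the hypothesis suffices.
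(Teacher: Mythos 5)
Your proof is correct and follows essentially the same route as the paper: the transposition $\tau$ of candidates $i$ and $j$, the key identity $\tau P = P\sigma$ (which holds precisely because $\Sigma_i(P)=\Sigma_j(P)$), and the combination of \textbf{(A)} and \textbf{(N)} to force $f(P)$ to be a fixed point of $\tau$. The only cosmetic difference is that the paper phrases the conclusion as a direct contradiction ($e_i = e_j$) after assuming $f(P)=e_i$, whereas you first derive $\tau f(P)=f(P)$ and then rule out $e_i$ and $e_j$; these are the same argument.
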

\begin{proof}
Let $P$ be a tie between (at least) two candidates $i$ and $j$. We suppose, for the sake of contradiction, that $f(P) = e_i$ and let $\tau$ be a matrix which swaps rows $i$ and $j$, leaving the remaining rows unaltered. Since $P$ is a tie we have $\tau P = P \sigma$ for some matrix $\sigma$ corresponding to interchanging columns $e_i$ with $e_j$ in the profile $P$. We have 
$$f(\tau P) = f(P \sigma) \stackrel{\textbf{(A)}}{=} f(P) = e_i,$$

\noindent
but on the other hand 
$$\tau f(P) = \tau e_i = e_j.$$

\noindent
This implies, due to \textbf{(N)}, that $e_i = e_j,$ which is the desired contradiction. The possibility $f(P) = e_j$ is dismissed analogously. 
\end{proof}

\begin{theorem}
A social choice function satisfies \textbf{(A)}, \textbf{(N)}, \textbf{(DP)}, \textbf{(PO)} and \textbf{(RS)} if and only if it is a majority voting rule
\begin{gather*}
Maj(P) := \left\{\begin{array}{cl}
e_{k^*} & \text{ if }\ \Sigma_{k^*}(P) > \Sigma_k(P)\ \text{ for every }\ k\neq 0,k^*,\\
0 & \text{ otherwise.}
\end{array}\right.
\end{gather*}

\noindent
Furthermore, \textbf{(DP)} is redundant for $m\geqslant 4$ candidates.
\label{WoegingerMay}
\end{theorem}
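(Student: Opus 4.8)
The plan is to prove both implications, handling the forward (``if'') direction as the easier verification and then establishing the harder converse by induction on the number of voters. For the forward direction I would check that $Maj$ satisfies each of the five axioms. Anonimity and neutrality are immediate, since $Maj$ depends only on the tallies $\Sigma_1(P),\ldots,\Sigma_m(P)$, which are invariant under permuting the voters and permuted accordingly under permuting the candidates. The duel property and Pareto optimality are equally direct from the definition. The only genuinely nontrivial check is \textbf{(RS)}, and I expect this to be the first place where care is needed: one must analyze how a single column deletion perturbs the tallies and then feed the resulting subprofile outcomes back into $Maj$.

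For the \textbf{(RS)} verification I would fix a profile $P$ with $n\geqslant 2$ voters and examine the meta-profile $Q := (Maj(P^{-1})\,|\,\ldots\,|\,Maj(P^{-n}))$, distinguishing cases by the top tally. If $P$ has a strict winner $k^*$, then deleting any non-$k^*$ vote leaves $k^*$ winning, while deleting a $k^*$ vote can at worst force a tie; hence every column of $Q$ equals $e_{k^*}$ or $0$ with at least one $e_{k^*}$, so $Q$ is a Pareto profile and $Maj(Q)=e_{k^*}$. If instead $P$ is a tie, I would let $T$ be the set of candidates attaining the maximal tally and split on $|T|$: when $|T|\geqslant 3$ every deletion still leaves a tie, so $Q$ is the all-zero profile and $Maj(Q)=0$; when $T=\{i,j\}$, deleting an $i$-vote yields $e_j$ and deleting a $j$-vote yields $e_i$, producing exactly $\Sigma_i(P)=\Sigma_j(P)$ columns of each, so $Q$ is a tied duel and $Maj(Q)=0$. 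In every case $Maj(Q)=Maj(P)$, which is precisely \textbf{(RS)}.

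For the converse, assume $f$ satisfies all five axioms and argue $f=Maj$ by induction on $n$. The base case $n=1$ is settled directly: a single vote $e_k$ is a Pareto profile, so \textbf{(PO)} gives $f=e_k=Maj$, while the abstention profile $0$ is fixed by every candidate permutation, so \textbf{(N)} forces $f(0)=0=Maj(0)$. For the inductive step with $n\geqslant 2$, \textbf{(RS)} gives $f(P)=f(f(P^{-1})\,|\,\ldots\,|\,f(P^{-n}))$, and since each $P^{-l}$ has $n-1$ voters, the inductive hypothesis replaces $f(P^{-l})$ by $Maj(P^{-l})$, so $f(P)=f(Q)$ with $Q$ the very meta-profile analyzed above. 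It then remains to evaluate $f(Q)$ using the structure of $Q$: in the strict-winner case $Q$ is a Pareto profile, so \textbf{(PO)} yields $f(Q)=e_{k^*}$; in the $|T|\geqslant 3$ tie case $Q$ is all zeros and \textbf{(N)} yields $f(Q)=0$; and in the $T=\{i,j\}$ tie case $Q$ is a tied duel, so \textbf{(DP)} confines $f(Q)$ to $\{e_i,e_j,0\}$ while Lemma~\ref{TiersCannotBeWinners} excludes $e_i$ and $e_j$, leaving $f(Q)=0$. In each case $f(P)=Maj(P)$, closing the induction. The final claim, redundancy of \textbf{(DP)} for $m\geqslant 4$, is then immediate from Lemma~\ref{DPredundant}, since there any $f$ satisfying \textbf{(N)} automatically satisfies \textbf{(DP)}.

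The main obstacle throughout is the combinatorial bookkeeping of the meta-profile $Q$: one must verify precisely that single-voter deletions send $P$ into one of the three clean configurations (Pareto profile, all-zero profile, or tied duel), since it is exactly this trichotomy that lets \textbf{(PO)}, \textbf{(N)}, \textbf{(DP)} and Lemma~\ref{TiersCannotBeWinners} pin down $f(Q)$. Getting the tie analysis right --- in particular recognizing that a two-way top tie collapses each deletion to a decisive duel outcome and thereby reproduces a balanced duel in $Q$ --- is the crux on which both directions hinge.
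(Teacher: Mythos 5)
Your proof is correct and follows essentially the same route as the paper's: induction on the number of voters, with \textbf{(RS)} and the inductive hypothesis collapsing $f(P)$ to $f(Q)$, the base case handled by \textbf{(PO)} and \textbf{(N)}, the unique-leader case by \textbf{(PO)}, the top-tie case by \textbf{(DP)} together with Lemma~\ref{TiersCannotBeWinners}, and the redundancy claim by Lemma~\ref{DPredundant}. If anything, your write-up is more complete in two spots: you actually verify that $Maj$ satisfies \textbf{(RS)} (the paper dismisses the forward direction as ``straightforward''), and your $|T|\geqslant 3$ subcase explicitly covers ties among three or more top candidates, a configuration that the paper's formal Case 1 (which demands $\Sigma_i(P) = \Sigma_j(P) > \Sigma_k(P)$ for every $k\neq 0,i,j$) does not literally cover.
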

\begin{proof}
One direction of the statement is straightforward, since the majority rule clearly satisfies all axioms \textbf{(A)}, \textbf{(N)}, \textbf{(OA)}, \textbf{(PO)} and \textbf{(RS)}. In order to prove the reverse implication we use the mathematical induction on the number of voters. If there is just one voter, i.e., $n = 1$, then $f(e_k) = e_k$ for every $k=1,\ldots,m$ by \textbf{(PO)}. Furthermore, by \textbf{(N)} we have 
$$\tau f(0) = f(\tau 0) = f(0)$$

\noindent
for every permutation matrix $\tau$. This implies that $f(0) = 0$ and, consequently, $f = Maj$ for $n = 1.$

Next, we assume that $f = Maj$ for $n-1$ voters and demonstrate that the equality holds for $n$ voters as well. We divide our reasoning into two separate cases. First, it can happen that two (or more) candidates in the profile $P$ get an equal number of votes, which is greater than any other number of votes (we call it a dominating tie). Mathematically: there exist two distinct candidates $i,j = 1,\ldots,m$ such that for every $k\neq 0,i,j$ we have $\Sigma_i(P) = \Sigma_j(P) > \Sigma_k(P).$  

The second scenario occurs when there is a clear leader amongst the candidates in the profile $P$. This leader accumulates a higher number of votes than any other candidate. Mathematically: there exists a candidate $k^*=1,\ldots,m$ such that $\Sigma_{k^*}(P) > \Sigma_k(P)$ for every $k\neq 0,k^*.$\\
\vspace{0.3cm}

\noindent
\textbf{Case 1: Dominating tie.}

By the inductive assumption we have 
\begin{gather*}
f(P^{-k}) = Maj(P^{-k}) = \left\{
\begin{array}{cl}
e_j & \text{if }\ P_k = e_i\ \text{ and }\ \Sigma_j(P^{-k}) > \Sigma_l(P^{-k})\ \text{ for every }\ l\neq 0,j,\\
e_i & \text{if }\ P_k = e_j\ \text{ and }\ \Sigma_i(P^{-k}) > \Sigma_l(P^{-k})\ \text{ for every }\ l\neq 0,i,\\
0 & \text{otherwise.}
\end{array}\right.
\end{gather*}

\noindent
Invoking \textbf{(RS)}, the above implies that 
$$f(P) = f(f(P^{-1}), \ldots, f(P^{-n})) = f(P')$$

\noindent
where $P'$ is a voting profile consisting of columns $e_i, e_j$ and $0$. Due to \textbf{(DP)} $f(P')$ can only be $e_i, e_j$ or $0$, since $\Sigma_k(P') = 0$ for every $k\neq 0,i,j.$ On the other hand, $f(P)$ can be neither $e_i$ nor $e_j$ due to Lemma \ref{TiersCannotBeWinners}. Hence, it follows that $f(P) = 0 = Maj(P).$\\
\vspace{0.3cm}

\noindent
\textbf{Case 2: Leader.}

By the inductive assumption we have 
\begin{gather*}
f(P^{-k}) = Maj(P^{-k}) = \left\{
\begin{array}{cl}
e_{k^*} & \text{if }\ P_k = e_{k^*}\ \text{ and }\ \Sigma_{k^*}(P^{-k}) > \Sigma_l(P^{-k})\ \text{ for every }\ l\neq 0,k^*,\\
0 & \text{otherwise.}
\end{array}\right.
\end{gather*}

\noindent
Invoking \textbf{(RS)}, the above implies that 
$$f(P) = f(f(P^{-1}), \ldots, f(P^{-n})) = f(P')$$

\noindent
where $P'$ is a voting profile consisting of columns $e_{k^*}$ and $0$. Due to \textbf{(PO)} we have $f(P') = e_{k^*}$, which concludes the first part of the proof. It remains to note that the redundancy of \textbf{(DP)} for $m\geqslant 4$ follows from Lemma \ref{DPredundant}.  
\end{proof}

Trying to tie up loose ends, let us investigate the relations between some of the axioms. We have already demonstrated that the duel property is a consequence of neutrality if $m \geqslant 4$. Furthermore, we have the following result:

\begin{theorem}
The axioms \textbf{(N)}, \textbf{(PO)} and \textbf{(RS)} are independent, i.e., there exist (anonymous) social choice functions $f_1,f_2,f_3$ such that 
\begin{enumerate}
	\item $f_1$ satisfies \textbf{(A)}, \textbf{(DP)}, \textbf{(PO)} and \textbf{(RS)} but not \textbf{(N)},
	\item $f_2$ satisfies \textbf{(A)}, \textbf{(N)}, \textbf{(DP)} and \textbf{(RS)} but not \textbf{(PO)},
	\item $f_3$ satisfies \textbf{(A)}, \textbf{(N)}, \textbf{(DP)} and \textbf{(PO)} but not \textbf{(RS)}.
\end{enumerate}
\end{theorem}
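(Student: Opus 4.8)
The plan is to prove independence by exhibiting three explicit anonymous social choice functions, each engineered to fail precisely one of \textbf{(N)}, \textbf{(PO)}, \textbf{(RS)} while retaining the other two (together with the auxiliary \textbf{(A)} and \textbf{(DP)}).

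I would dispatch the two ``easy'' witnesses first. For $f_2$ I take the constant tie $f_2(P):=0$ for every profile $P$. Anonymity is immediate, neutrality holds because $\tau 0 = 0$, \textbf{(DP)} holds because the output $0$ is always admissible, and \textbf{(RS)} holds because the reduced profile $(f_2(P^{-1})|\ldots|f_2(P^{-n}))$ is the all-zero profile, on which $f_2$ again returns $0$; only \textbf{(PO)} fails, since a Pareto profile for $k^*$ is sent to $0$ rather than $e_{k^*}$. For $f_3$ I take the unanimity consent rule $UC$ from \eqref{unanimousconsent}. The discussion following \eqref{unanimousconsent} already records that $UC$ satisfies \textbf{(A)}, \textbf{(N)}, \textbf{(DP)} and \textbf{(PO)} yet differs from $Maj$, so Theorem~\ref{WoegingerMay} forces $UC$ to violate \textbf{(RS)}; alternatively I would give a direct counterexample such as $P=(e_1|e_1|e_2)$, where $UC(P)=0$ while the reduced profile is Pareto for candidate $1$ and hence evaluated to $e_1$.

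The substantive witness is $f_1$, which must be non-neutral but must still survive \textbf{(RS)}. I would set $f_1(P):=e_1$ whenever $\Sigma_1(P)>0$ and $f_1(P):=Maj(P)$ otherwise; that is, candidate $1$ is privileged as soon as anyone votes for him, and in his total absence the profile is decided by majority among the remaining candidates. Non-neutrality is witnessed by $P=(e_1|e_2)$, where $f_1(P)=e_1$ but a swap $\tau$ of candidates $1$ and $2$ yields $f_1(\tau P)=e_1\neq e_2=\tau f_1(P)$. Anonymity is clear since only vote counts enter, \textbf{(PO)} is immediate from the two branches, and \textbf{(DP)} follows by treating duels that do and do not involve candidate $1$ separately (when candidate $1$ is absent, $f_1$ coincides with the \textbf{(DP)}-respecting $Maj$).

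The main obstacle is verifying \textbf{(RS)} for $f_1$; naive privileging schemes (for instance, breaking majority ties in favour of candidate $1$) fail \textbf{(RS)} because the reduction step amplifies the favoured candidate. I would argue by a case split on $\Sigma_1(P)$. If $\Sigma_1(P)>0$, then since $n\geqslant 2$ at least one deletion $P^{-l}$ still contains a vote for candidate $1$, so at least one column of the reduced profile equals $e_1$; hence its $\Sigma_1$ is positive and $f_1$ returns $e_1$, matching $f_1(P)$. If $\Sigma_1(P)=0$, then every deletion also has $\Sigma_1(P^{-l})=0$, so $f_1(P^{-l})=Maj(P^{-l})$, and candidate $1$ can never be a strict majority leader on a sub-profile where he has no votes, so every column of the reduced profile avoids $e_1$; the reduced profile therefore again has $\Sigma_1=0$, whence $f_1$ evaluates it as $Maj$ and the chain $f_1(P)=Maj(P)=Maj(\text{reduced profile})=f_1(\text{reduced profile})$ follows from the already-established \textbf{(RS)} for $Maj$. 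The only delicate points are the use of $n\geqslant 2$ in the first case and the observation that the regime $\Sigma_1=0$ is closed under the reduction, which is exactly what lets $f_1$ inherit \textbf{(RS)} from $Maj$.
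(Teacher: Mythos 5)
Your proposal is correct, and for two of the three witnesses it coincides exactly with the paper: the constant rule $f_2 = 0$ and the unanimity consent rule $f_3 = UC$ with the same counterexample profile $P = (e_1|e_1|e_2)$ (your alternative indirect argument --- that $UC$ satisfies the other four axioms yet differs from $Maj$, so Theorem \ref{WoegingerMay} forces a violation of \textbf{(RS)} --- is also valid). Where you genuinely diverge is in the non-neutral witness $f_1$. The paper uses a lexicographic priority rule: the winner is the \emph{lowest-indexed} candidate receiving at least one vote, with $f_1(P)=0$ only on the all-abstention profile; its \textbf{(RS)} verification is a direct case analysis of which candidate is ``first present'' after each deletion. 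You instead privilege a single fixed candidate ($f_1(P)=e_1$ whenever $\Sigma_1(P)>0$, and $f_1 = Maj$ otherwise) and verify \textbf{(RS)} by an inheritance argument: the regime $\Sigma_1(P)>0$ is absorbing under the reduction (here your explicit use of $n\geqslant 2$ is exactly the right care, since with one voter a deletion could lose the unique vote for candidate $1$), while the regime $\Sigma_1(P)=0$ is closed under the reduction, so on it $f_1$ inherits \textbf{(RS)} from $Maj$. Your route is more modular --- it delegates the combinatorial work to the already-asserted fact that $Maj$ satisfies \textbf{(RS)} (the ``straightforward'' direction of Theorem \ref{WoegingerMay}, which the paper states without proof, so your argument is only as self-contained as that assertion) --- whereas the paper's rule is verified from scratch and depends on no other result; on the other hand your two-regime argument is arguably cleaner than the paper's enumeration, which in fact glosses over the sub-profile that becomes all-abstention after deletion. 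Your parenthetical observation that tie-breaking in favour of candidate $1$ would \emph{fail} \textbf{(RS)} (because the reduction amplifies the favoured candidate) is correct and is a useful warning that not every non-neutral perturbation of $Maj$ works here.
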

\begin{proof}
\begin{enumerate}
	\item Let 
	\begin{gather*}
	f_1(P) := \left\{\begin{array}{cl}
	0 & \text{ if }\ P = (0|\ldots|0),\\
	e_{k^*} & \text{ if }\ \Sigma_{k^*}(P) > 0\ \text{ and }\ \Sigma_k(P) = 0\ \text{ for every }\ 0<k<k^*.\\
	\end{array}\right.
	\end{gather*}
	
	\noindent
	Intuitively, $f_1$ introduces an ``artificial'' order between the candidates (thus breaking neutrality axiom) and declares the first person with nonzero number of votes as the winner. It is easy to see that $f_1$ satisfies \textbf{(A)}, \textbf{(DP)} and \textbf{(PO)}. \textbf{(RS)} is satisfied if $P$ is a zero profile, so let $P$ be such that $\Sigma_{k^*}(P) > 0$ and $\Sigma_k(P) = 0$ for every $0<k<k^*.$ This implies $f_1(P) = e_{k^*}$ and by \textbf{(A)} we may, without loss of generality, assume that $e_{k^*}$ is supported by the first voter, i.e., $P_1 = e_{k^*}.$ Next, we observe that 
	\begin{gather*}
	f_1(P^{-1}) = \left\{\begin{array}{cl}
	e_{k^*} & \text{ if someone besides voter 1 supports } e_{k^*},\\
	e_k & \text{ for some $k > k^*$ if no one besides voter 1 supports } e_{k^*}.
	\end{array}\right.
	\end{gather*}
	
	On the other hand we have $f_1(P^{-l}) = e_{k^*}$ for every $l=2,\ldots,n.$ Hence, we have 
	$$f_1(f_1(P^{-1}) | f_1(P^{-2}) | \ldots | f_1(P^{-n})) = f_1(e_{k,\text{ where } k\geqslant k^*}| e_{k^*} | \ldots | e_{k^*}) = e_{k^*} = f_1(P),$$
	
	\noindent
	which proves that $f_1$ satisfies \textbf{(RS)}.
	
	Finally, in order to prove that $f$ does not satisfy \textbf{(N)} it suffices to check that if $\tau$ is a transposition of first two rows, then 
	$$f_1(\tau (e_1|e_2)) = f_1(e_2|e_1) = e_1,$$
	
	\noindent
	which is different than
	$$\tau f_1(e_1|e_2) = \tau e_1 = e_2.$$
	
	\item It suffices to note that $f_2 = 0$ satisfies \textbf{(A)}, \textbf{(N)}, \textbf{(DP)} and \textbf{(RS)} but not \textbf{(PO)}.
	\item Let $f_3$ be the unanimous consent method (i.e., $f_3 = UC$) defined by \eqref{unanimousconsent} in Section \ref{section:toolbox}. As we have already noted, $f_3$ satisfies \textbf{(A)}, \textbf{(N)}, \textbf{(DP)} and \textbf{(PO)}. In order to see that $f_3$ does not satisfy \textbf{(RS)} take $P = (e_1|e_1|e_2),$ so that $f_3(P) = 0.$ However, we have 
	\begin{equation*}
	\begin{split}
	f_3(f_3(P^{-1}) | f_3(P^{-2}) | f_3(P^{-3})) &= f_3(f_3(e_1|e_2) | f_3(e_1|e_2) | f_3(e_1|e_1)) \\
	&= f_3(0|0|e_1) = e_1,
	\end{split}
	\end{equation*}
	
	\noindent
	which demonstrates that \textbf{(RS)} does not hold. 
\end{enumerate}
\end{proof}

\end{document}